\documentclass[conference]{IEEEtran}
\IEEEoverridecommandlockouts
\usepackage{cite}

\ifCLASSINFOpdf
  \usepackage[pdftex]{graphicx}
  \graphicspath{{../pdf/}{../jpeg/}}
  \DeclareGraphicsExtensions{.pdf,.jpeg,.png}
\else
  \usepackage[dvips]{graphicx}
  \graphicspath{{../eps/}}
  \DeclareGraphicsExtensions{.eps}
\fi
\usepackage{amsmath}
\usepackage{amssymb}
\usepackage{xcolor}
\usepackage{algorithm}
\usepackage{algpseudocode}
\usepackage{url}

\begin{document}
%
\title{Fast Tuning of Pinching Antenna-Assisted ISAC Systems via User Clustering}
%
%
%


\author{
\IEEEauthorblockN{Ruizhi Niu}
\IEEEauthorblockA{
Department of Electrical and Computer Engineering\\
University of Toronto\\
\textit{ruizhi.niu@mail.utoronto.ca}
}
\and
\IEEEauthorblockN{Ali Bereyhi}
\IEEEauthorblockA{
Department of Electrical and Computer Engineering\\
University of Toronto\\
\textit{ali.bereyhi@utoronto.ca}
}
\thanks{This work was supported by the Natural Sciences and Engineering Research Council of Canada (NSERC).}
}

\newtheorem{remark}{Remark}
\newtheorem{theorem}{Theorem}

\maketitle

\begin{abstract}
Pinching-antenna systems (PASSs) offer a great degree of flexibility for integrated sensing and communications (ISAC) by adapting the pinching locations for both communication channel reshaping  and sensing-beam steering. Despite that, direct location optimization in PASSs is high-dimensional and non-convex, making their fast reconfiguration difficult in practice. This work proposes a low-complexity PASS-assisted ISAC design based on \textit{user clustering}. Instead of direct optimization of pinching locations, the elements on each waveguide are distributed over multiple segments, where each segment serves a cluster of users or targets. This alternative configuration reduces the location tuning problem to a low-dimensional weight allocation task, which can be efficiently addressed via a gradient based method. We propose a low-complexity iterative algorithm, in which the digital precoder is updated via an equivalent weighted mean squared error minimization, and pinching locations are updated via weight allocation and projection.   Our numerical experiments show that the proposed design can significantly enhance the communication-sensing trade-off of ISAC systems over the benchmark designs with modest computational cost. The results highlight the potential gain of deploying PASS-assisted transceiver in practical ISAC systems.
\end{abstract}


%
\IEEEpeerreviewmaketitle

\section{Introduction}
%
%
%
%

 
\label{sec:intro}
Pinching-antenna systems (PASSs) have recently emerged as 
promising reconfigurable architectures for 
next-generation wireless systems \cite{Ding2025PASSPerspective}. Using movable pinching elements along dielectric waveguides, PASSs turn 
the antenna array geometry into a design variable 
that can be adapted to the propagation 
environment and the spatial distribution of 
users \cite{Ding2025PASSPerspective,Wang2025ModelingPASS}. Compared 
with conventional rigid-array architectures, this 
geometric degree of freedom offers a new lever for spatial processing. A natural paradigm for the integration of PASSs into wireless networks is integrated sensing and communication (ISAC). 
ISAC consolidates data transmission and target sensing 
on common waveforms and resources. The designs in ISAC simultaneously shape the user-facing channels and the radar illumination pattern. Characterizing 
the resulting trade-off is a central question in the 
ISAC literature. Due to their spatial flexibility, PASSs are of particular interest for ISAC \cite{ISAC1, ISAC2}. In fact, the pinching-element geometry enables 
a hybrid spatial and digital beamforming that is highly flexible. 
Despite its potential, a \textit{practical} PASS-enabled ISAC design remains open: the geometric degrees of freedom offered by PASS deliver substantial gains \textit{only if} the joint tuning of pinching locations and digital beamforming is performed at the system update rate, i.e., once within a  coherence time. Considering the non-convex nature of the underlying design problem, this is a challenging design problem. Motivated by this, this work develops a low-complexity algorithm whose per-iteration cost is dominated by closed-form operations. 

\subsection{Related Work}
\label{subsec:related}
Recent work has explored the deployment of PASSs for various functionalities in wireless systems. Joint digital precoding and pinching-location tuning is addressed in \cite{11162401,bereyhi2026MIMOPASS}, demonstrating substantial gains over fixed-location architectures. Deploying PASSs for non-orthogonal multiple access (NOMA) is studied in \cite{NomaPASS}. Dynamic and static location tuning in PASSs for energy-efficient transmission is investigated in \cite{asaad2026dynamicstaticenergyefficient}, highlighting the importance of updating geometry at the network-coherence rate. 
Several recent lines of work have investigated PASS-assisted ISAC. The study in \cite{ISAC1} develops a hybrid design for ISAC, in which the pinching locations are optimized to maximize the illuminated power. The work in \cite{ISAC2} studies the impact of tunable pinching locations on the achievable sensing-communication rate region. 

The most related study to this work is \cite{asaad2026low}, in which the \textit{bi-partitioning} trick is utilized to design a low-complexity PASS-assisted ISAC system with optimal sensing-communication trade-off. While related, the proposed design in \cite{asaad2026low} relies on a single-user single-waveguide PASS geometry and does not directly extend to \textit{multiuser} MIMO-PASS settings. 
In this work, we bridge this gap by considering a generic multiuser PASS-assisted ISAC setting in which the access point deploys multiple transmit and receive waveguides. We invoke user clustering, as an extension of the bi-partitioning trick, to obtain 
a low-dimensional parameterization of pinching locations. The clusters determine a small set of projection points on the waveguides, and the design variable reduces to the number of elements allocated to each point. 
Using this trick, the aperture split between communication and sensing is tuned by a single low-dimensional allocation vector, which can be efficiently optimized by a simple gradient-based algorithm. 

\subsection{Contributions}
We develop a class of low-complexity algorithms for hybrid precoding in PASS-assisted ISAC systems. The main
contributions are five-fold: 
($i$) We formulate the joint design 
as a multi-objective optimization problem that simultaneously maximizes the communication sum-rate and sensing mutual information. We invoke scalarization to approximate the Pareto optimal solution. 
($ii$) Invoking alternating-optimization (AO), we design an iterative scheme to approximate the joint design. The solution is computed by iterating between the digital and analog precoding tasks. 
($iii$) Using the rank-one factorization of the sensing channel, we reformulate the marginal digital precoding as a unified weighted minimum mean-square-error (WMMSE) problem. We invoke the decoupling between communication and sensing objectives to transform the problem into a quadratic surrogate maximization, for which 
a semi-closed-form solution is determined via Sherman-Morrison rank-one correction. 
($iv$) To address the location optimization task, we exploit the spatial properties of the PASS system. We invoke the $k$-means clustering algorithm to transform the location optimization task into a variational weight-allocation problem, which is efficiently solved using the projected gradient descent algorithm. 
The resulting continuous allocation is then mapped to discrete element positions. 
($v$) We validate our design through numerical experiments. 
Our results show that the proposed PASS-assisted ISAC design can significantly expand the achievable sensing-communication rate region at a practically-tolerable  computational cost.


\paragraph*{Notation}
Bold lowercase and uppercase letters denote vectors and 
matrices, respectively. Superscripts $(\cdot)^{T}$, 
$(\cdot)^{*}$, and $(\cdot)^{H}$ denote transpose, 
conjugate, and conjugate transpose, respectively. The complex plane is denoted by $\mathbb{C}$, and $\mathcal{CN}(\boldsymbol{\mu},\boldsymbol{\Sigma})$ denotes a complex Gaussian distribution with mean $\boldsymbol{\mu}$ and covariance $\boldsymbol{\Sigma}$. The trace is shown by $\mathrm{tr}(\cdot)$. 
We use $\Pi_{\mathcal{S}}(\cdot)$ to denote the 
Euclidean projection onto the set~$\mathcal{S}$.

\section{System Model}
\label{sec:system_model}
Consider a downlink PASS-aided access point (AP) that serves $K$ users and senses a 
single-point target. Let the location of the $k$-th user be $\mathbf{u}_k = [x^{u}_k,\,y^{u}_k,\,z^{u}_k]$, 
for $k = 1,\dots,K$, and the target location be $\mathbf{u}^{t} = [x^{t},\,y^{t},\,z^{t}]$. 
The AP consists of a transmit and a receive PASS. The transmit PASS is equipped with $M_{\mathrm{tx}}$ 
dielectric waveguides extending along the $x$-axis at a 
fixed altitude $h$. The $m_{\mathrm{tx}}$-th waveguide 
has length $L^{\mathrm{tx}}_{m_{\mathrm{tx}}}$ and is located on the $y$-axis at $y^{\mathrm{tx}}_{m_{\mathrm{tx}}} = 
(m_{\mathrm{tx}}-1)d$, for some inter-waveguide spacing $d$. The receive PASS consists of $M_{\mathrm{rx}}$ waveguides of length $L^{\mathrm{rx}}_{m_{\mathrm{rx}}}$ deployed in 
identical configuration, separated from the transmitting block by a gap $D_{\mathrm{gap}}$ on the $y$-axis. We assume that the transmit and receive blocks are sufficiently isolated, such that the residual leakage from the transmit PASS into the receive PASS is negligible. 

Each transmit and receive waveguide 
carries $N_{\mathrm{tx}}$ and $N_{\mathrm{rx}}$ adjustable
pinching elements, respectively. The $n_{\mathrm{tx}}$-th pinching element on the $m_{\mathrm{tx}}$-th transmit waveguide is located at distance $\ell^{\mathrm{tx}}_{m_{\mathrm{tx}},n_{\mathrm{tx}}} \in 
[0,\,L^{\mathrm{tx}}_{m_{\mathrm{tx}}}]$, and hence its coordinate is
\begin{equation}
\mathbf{v}^{\mathrm{tx}}_{m_{\mathrm{tx}},n_{\mathrm{tx}}} 
= \big[\ell^{\mathrm{tx}}_{m_{\mathrm{tx}},
n_{\mathrm{tx}}},\; y^{\mathrm{tx}}_{m_{\mathrm{tx}}},\; 
h\big].
\label{eq:tx_coord}
\end{equation}
The receive element coordinate 
$\mathbf{v}^{\mathrm{rx}}_{m_{\mathrm{rx}},
n_{\mathrm{rx}}}$ for a receive waveguide is defined analogously. 
Aggregating all element positions yields the location vector
\begin{align}
\boldsymbol{\ell}^{i} &\triangleq 
\big[\ell^{i}_{1,1},\dots,
\ell^{i}_{1,N_{i}},\dots,
\ell^{i}_{M_{i},1},\dots,
\ell^{i}_{M_{i},N_{i}}\big] 
\label{eq:ltx_vec}
\end{align}
for $i \in \{\mathrm{tx}, \mathrm{rx}\}$, which collects all analog design variables. 

Let $\mathbf{s}[t] \in 
\mathbb{C}^{K}$ collect the encoded information symbols of the users at time $t$. 
The AP determines the transmit signal from $\mathbf{s}[t]$ via linear precoding,\footnote{Note that a dedicated sensing signal is ignored, as the system performs in \textit{downlink} mode; see for example \cite{asaad2026low}.} i.e.,
\begin{equation}
\mathbf{x}[t] = \mathbf{W}\mathbf{s}[t], 
\label{eq:tx_signal}
\end{equation}
where $\mathbf{W}\in\mathbb{C}^{M_{\mathrm{tx}}\times K}$ is 
the precoding matrix whose $k$-th column 
$\mathbf{w}_k$ represents the beamforming vector for user $k$. We assume that the total transmit power is constrained by $P$. With normalized uncorrelated information symbols, this means that the beamforming matrix satisfies $\mathrm{tr}(\mathbf{W}\mathbf{W}^H) \le P$.

\subsection{Communication Setting}
With line-of-sight propagation, the channel 
coefficient from the $n_i$-th element on the $m_i$-th 
waveguide, for $i \in \{\mathrm{tx}, \mathrm{rx}\}$, to a given location $\mathbf{u}$ is given by \cite{bereyhi2026MIMOPASS}
\begin{align}
   \hspace{-1mm}
    g^{i}_{m_i,n_i}(\mathbf{u}) 
    \hspace{-1mm}
    = 
    \hspace{-1mm}
    \frac{\beta\varrho_i}{D^{i}_{m_i,n_i}(\mathbf{u})}
        e^{ -j\kappa \bigl(D^{i}_{m_i,n_i}(\mathbf{u}) + n_{\mathrm{eff}}\ell^{i}_{m_i,n_i}\bigr) },
        \label{eq:elem_rx}
\end{align}
where $\varrho_\mathrm{rx} = 1$ and $\varrho_\mathrm{tx} = 1/\sqrt{N_{\mathrm{tx}}}$, $D^{i}_{m,n}(\mathbf{u}) = 
\|\mathbf{v}^{i}_{m_i,n_i} - \mathbf{u}\|$ 
is the propagation 
distance, $\kappa = 2\pi/\lambda_{0}$ is the
wavenumber with $\lambda_0$ being the wavelength, $n_{\mathrm{eff}}$ is the effective refractive 
index of the waveguide, and $\beta$ captures the 
 path-loss. The channel from the $m_i$-th waveguide to location $\mathbf{u}$ is obtained by aggregating the element contributions along the waveguide, i.e., 
 $g^{i}_m(\mathbf{u}) = \sum_{n} g^{i}_{m_i,n}(\mathbf{u})$. 
Collecting the channels associated with the 
transmit PASS yields the downlink channel 
\begin{equation}
\mathbf{g}_k(\boldsymbol{\ell}^{\mathrm{tx}}) = 
\big[g^{\mathrm{tx}}_{1}(\mathbf{u}_k),\dots,
g^{\mathrm{tx}}_{M_{\mathrm{tx}}}(\mathbf{u}_k)\big]^T ,
\label{eq:user_channel}
\end{equation}
to user $k$, which depends on the pinching locations. 

The received signal at user $k$ 
is therefore given by
\begin{equation}
y_k[t] = \mathbf{g}_k^T\mathbf{W}\mathbf{s}[t] + 
\varepsilon_k[t],
\label{eq:user_rx}
\end{equation}
for additive Gaussian noise $\varepsilon_k[t]\sim\mathcal{CN}(0,\sigma_u^{2})$. The signal-to-noise-and-interference (SINR) at user $k$ is thus given by
\begin{align}
    \mathrm{SINR}_k(\mathbf{W},\boldsymbol{\ell}^{\mathrm{tx}}) &= 
\frac{|\mathbf{g}_k^T\mathbf{w}_k|^{2}}
{\sum_{j\ne k}|\mathbf{g}_k^T\mathbf{w}_j|^{2} + 
\sigma_u^{2}}.
\end{align}
We can hence write the achievable rate to user $k$ as 
\begin{align}
    R_k(\mathbf{W},\boldsymbol{\ell}^{\mathrm{tx}}) &= \log_2(1+\mathrm{SINR}_k(\mathbf{W},\boldsymbol{\ell}^{\mathrm{tx}})),
\end{align}
and the weighted sum-rate as 
\begin{align}
    R_c(\mathbf{W},\boldsymbol{\ell}^{\mathrm{tx}}) = \sum_{k=1}^{K}\lambda_k R_k(\mathbf{W},\boldsymbol{\ell}^{\mathrm{tx}}),
\end{align}
for some weights $\lambda_k\ge 0$. In the sequel, we take the sum-rate expression as the metric to quantify the communication performance of the ISAC system. 

\subsection{Sensing Setting}
Let us define the steering vector to (and from) the target as 
\begin{align}
    \mathbf{a}^{i}(\mathbf{u}^{t}) = 
[g^{i}_{1}(\mathbf{u}^{t}),\dots,
g^{i}_{M_{i}}(\mathbf{u}^{t})]^T,
\end{align}
for $i \in \{\mathrm{tx}, \mathrm{rx}\}$.
The sensing channel is then described as $\mathbf{y}_{\mathrm{rx}} [t] = \mathbf{G}(\mathbf{u}^{t}) \mathbf{x} [t] + \varepsilon_{\mathrm{rx}}[t]$, where $ \varepsilon_{\mathrm{rx}}[t]\sim\mathcal{CN}(0,\sigma_{\mathrm{rx}}^{2})$, and 
\begin{equation}
\mathbf{G}(\mathbf{u}^{t}) = 
\mathbf{a}^{\mathrm{rx}}(\mathbf{u}^{t})(\mathbf{a}^{\mathrm{tx}}(\mathbf{u}^{t}))^{T}.
\label{eq:G_rankone}
\end{equation}
Assuming a frame of length $T$, the sensing mutual information (SMI) is upper-bounded as \cite{asaad2026low}
\begin{equation}
    I_s \le \frac{1}{T} \log_{2}\left( 1 + T \frac{\rho_t^2}{\sigma_{\text{rx}}^2} \operatorname{tr}(\mathbf{G}(u^t) \mathbf{W} \mathbf{W}^H \mathbf{G}(u^t)^H) \right),
\end{equation}
where $\rho_t^{2}$ is the target reflection variance. 
We denote this upper bound by $I_s^{\mathrm d} (\mathbf{W},\boldsymbol{\ell}^{\mathrm{tx}},
\boldsymbol{\ell}^{\mathrm{rx}})$ and adopt it as the sensing performance metric in the sequel. 
Noting that $\mathrm{tr}(\mathbf{G}
\mathbf{W}\mathbf{W}^H\mathbf{G}^H) = 
\|\mathbf{a}^{\mathrm{rx}}\|^{2}\, 
\|\mathbf{W}^H\mathbf{a}^{\mathrm{tx}\,*}\|^{2}$ and defining $\eta_s \triangleq 
T\rho_t^{2}\sigma_{\mathrm{rx}}^{-2}\,
\|\mathbf{a}^{\mathrm{rx}}\|^{2}$ and $\tilde{\mathbf{a}} 
\triangleq \sqrt{\eta_s}\,\mathbf{a}^{\mathrm{tx}\,*}$, the sensing metric  is compactly written as
\begin{equation}
I_s^{\mathrm d}(\mathbf{W},\boldsymbol{\ell}^{\mathrm{tx}},
\boldsymbol{\ell}^{\mathrm{rx}}) = 
\frac{1}{T}\log_{2}\!\big(1 + 
\|\mathbf{W}^H\tilde{\mathbf{a}}\|^{2}\big).
\label{eq:Is_compact}
\end{equation}

\section{PASS-Assisted ISAC Design}
\label{sec:joint_design}

The PASS-enabled ISAC transmitter described in 
Section~\ref{sec:system_model} pursues two functionally 
distinct objectives through a single physical resource: 
delivering high sum-rate, and extracting informative echoes from the sensing 
target. These two objectives are intrinsically coupled. In the sequel, we formulate a hybrid design through the multi-objective optimization framework.

\subsection{Multi-Objective Formulation}
Treating the sum-rate $R_c$ and SMI $I_s^{\mathrm d}$ as two competing performance metrics, the hybrid design is formulated as
\begin{equation*}
\begin{aligned}
(\mathrm{P}_0):\quad &\max_{\mathbf{W},\,
\boldsymbol{\ell}^{\mathrm{tx}},\,
\boldsymbol{\ell}^{\mathrm{rx}}}\;\;
\big[R_c(\mathbf{W},\boldsymbol{\ell}^{\mathrm{tx}}),\;\;
I_s^{\mathrm d}(\mathbf{W},\boldsymbol{\ell}^{\mathrm{tx}},
\boldsymbol{\ell}^{\mathrm{rx}})\big] \\
\;\;\mathrm{s.t.}\;\;
&\mathrm{tr}(\mathbf{W}\mathbf{W}^H)\le P,\;\; 
0\le \ell^{i}_{m,n}\le 
L^{i}_{m}: \;\; i\in \{\mathrm{tx},\mathrm{rx} \}.
\end{aligned}
\label{eq:P0}
\end{equation*}
The solutions to $\mathrm{P}_0$ describe a Pareto front: 
each point on this front corresponds to an allocation of 
the shared aperture between communication and sensing 
that cannot be improved on one objective without 
sacrificing the other.

To convert $\mathrm{P}_0$ into a tractable optimization  
we invoke the scalarization technique: we take a weighted-sum of objectives as a linear surrogate.
Using this technique, the design problem is reformulated as
\begin{equation*}
\begin{aligned}
(\mathrm{P}_1):\quad 
&\max_{\mathbf{W},\,
\boldsymbol{\ell}^{\mathrm{tx}},\,
\boldsymbol{\ell}^{\mathrm{rx}}}\;\;
w_c\,R_c(\mathbf{W},
\boldsymbol{\ell}^{\mathrm{tx}}) + 
w_s\,I_s^{\mathrm d}(\mathbf{W},
\boldsymbol{\ell}^{\mathrm{tx}},
\boldsymbol{\ell}^{\mathrm{rx}}) \\
\;\;\mathrm{s.t.}\;\;
&\mathrm{tr}(\mathbf{W}\mathbf{W}^H)\le P ,\;\; 
0\le \ell^{i}_{m,n}\le 
L^{i}_{m}: \;\; i\in \{\mathrm{tx},\mathrm{rx} \},
\end{aligned}
\label{eq:P1}
\end{equation*}
for some $w_c,w_s\ge 0$ that satisfy $w_c+w_s=1$.
As the weight pair $(w_c,w_s)$ sweeps the unit simplex, 
the optimal value of $\mathrm{P}_1$ produces a sampled trade-off curve that serves as an inner approximation to the true Pareto frontier \cite{MOOP1}.

\subsection{Solution by Alternating Optimization in a Nutshell}
\label{subsec:ao_overview}
The variational problem $\mathrm{P}_1$ is non-convex on three counts: 
($i$) the sum-rate is non-concave in 
$\mathbf{W}$, ($ii$) the SMI is 
non-concave in $\mathbf{W}$, and ($iii$) both 
$\mathbf{g}_k$ and $\mathbf{a}^{\mathrm{tx}}$ depend 
non-linearly on the pinching locations through an exponential kernel. 
A direct solution is thus intractable. We hence approximate the solution using the AO method. 

The AO-based design approximates the solution by iterating between two marginal tasks: ($i$) digital precoding, and ($ii$) pinching location tuning. For precoding, we develop a unified WMMSE reformulation of the subproblem. This enables us to 
find a closed-form update rule. 
For location tuning, we exploit the spatial 
clustering of the users. Specifically, the $K$ 
users are partitioned into $C \leq K$ clusters and each cluster, together with the target, defines a projection point on 
the waveguide axis. This enables us to transform the  objective to a low-dimensional parameterized function whose optimization can be efficiently carried out 
using projected gradient descent. The optimal design is then approximated by iterating between the two subproblems for a few iterations. In the sequel, we go through each subproblem separately.


\section{Digital Precoding via WMMSE}
\label{subsec:precoder}
With fixed pinching locations, all location-dependent quantities
become constant, and the precoding subproblem reads
\begin{equation*}
\mathrm{P}_2:\ \max_{\mathbf{W}}\ w_c R_c(\mathbf{W}) + w_s I_s^{\mathrm d}(\mathbf{W})
\quad \text{s.t.}\ \mathrm{tr}(\mathbf{W} \mathbf{W}^H) \le P. \label{eq:P2}
\end{equation*}
Neither $R_c$ nor $I_s^{\mathrm d}$ is concave in $\mathbf{W}$; thus, direct
optimization is intractable. To approximate the solution, we rewrite the problem in a variational form.

\subsection{Variational WMMSE Problem}
We rewrite $\mathrm{P}_2$ as a WMMSE problem. 
This enables us to form a semi-closed-form $\mathbf{W}$ update. The variational problem is presented in the following theorem.

\begin{theorem}[Equivalent WMMSE Problem]\label{th:1}
The solution to $\mathrm{P}_2$ is equivalently given by the following minimization
\begin{equation}
\min_{\mathbf{W},\,\{c_k,v_k\},\,\mathbf{u},\,c_s,\,v_s}\ F
\quad \text{s.t.}\ \mathrm{tr}(\mathbf{W} \mathbf{W}^H) \le P, \|\mathbf{u}\|=1,
\label{eq:joint_min}
\end{equation}
where $\mathbf{u}\in\mathbb{C}^K$, $c_k,c_s\in\mathbb{C}$, and
$v_k,v_s>0$ for $k=1,\ldots,K$. The objective $F$ is defined as 
\begin{equation}
F = w_c\!\sum_{k=1}^K\!\lambda_k(v_k e_k - \ln v_k)
+ w_s \lambda_s (v_s e_s - \ln v_s), 
\end{equation}
for $\lambda_s = 1/T$ with
\begin{align}
    e_k &= |c_k|^2(\sum_j |\mathbf{g}_k^T \mathbf{w}_j|^2 + \sigma_u^2) - 2\Re\{c_k \mathbf{g}_k^T \mathbf{w}_k\} + 1,\\
    e_s &= |c_s|^2(|\tilde{\mathbf{a}}^H \mathbf{W} \mathbf{u}|^2 + 1)
- 2\Re\{c_s\,\tilde{\mathbf{a}}^H \mathbf{W} \mathbf{u}\} + 1.
\end{align}
\end{theorem}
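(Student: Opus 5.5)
The plan is to use the standard WMMSE equivalence (Shi et al.) separately for each user rate and for the sensing term. The only new ingredient is to rewrite the sensing term as the rate of a fictitious single-stream link, using the auxiliary unit vector $\mathbf{u}$. Constants and base changes are handled as follows: $\log_2 x = \ln x/\ln 2$, so we maximize the natural-log version of $\mathrm{P}_2$. The additive constants ($\sum\lambda_k$ and $\lambda_s$) that arise from $\min_v (ve-\ln v)=1+\ln e$ do not change the maximizer. Hence we aim to show that minimizing $F$ over all auxiliary variables gives $\mathrm{const} - \big(w_c\sum_k\lambda_k\ln(1+\mathrm{SINR}_k)+w_s\lambda_s\ln(1+\|\mathbf{W}^H\tilde{\mathbf{a}}\|^2)\big)$ for every feasible $\mathbf{W}$. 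Minimizing this over $\mathbf{W}$ then coincides with solving $\mathrm{P}_2$.

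\emph{Communication terms.} Fix $\mathbf{W}$. The term $e_k$ is the MSE of estimating $s_k$ from $c_k y_k$, and it is a convex quadratic in $c_k$. Its minimizer is $c_k^\star=(\mathbf{g}_k^T\mathbf{w}_k)^*/(\sum_j|\mathbf{g}_k^T\mathbf{w}_j|^2+\sigma_u^2)$, which gives $e_k^\star = 1-|\mathbf{g}_k^T\mathbf{w}_k|^2/(\sum_j|\mathbf{g}_k^T\mathbf{w}_j|^2+\sigma_u^2)=1/(1+\mathrm{SINR}_k)$. For fixed $e_k>0$, the map $v\mapsto ve_k-\ln v$ is strictly convex on $v>0$ with minimizer $v_k=1/e_k$ and minimum value $1+\ln e_k$. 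Since $\ln$ is increasing and $\lambda_k,w_c\ge0$, the joint minimum over $(c_k,v_k)$ is obtained by first minimizing $e_k$. The result is $\lambda_k(1-\ln(1+\mathrm{SINR}_k))$.

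\emph{Sensing term.} This is the step needing care. Consider the scalar $\tilde{\mathbf{a}}^H\mathbf{W}\mathbf{u}$. By the same argument, minimizing $e_s$ over $c_s$ gives $e_s^\star=1/(1+|\tilde{\mathbf{a}}^H\mathbf{W}\mathbf{u}|^2)$. Minimizing over $v_s$ then gives $\lambda_s(1-\ln(1+|\tilde{\mathbf{a}}^H\mathbf{W}\mathbf{u}|^2))$. This value is decreasing in $|\tilde{\mathbf{a}}^H\mathbf{W}\mathbf{u}|$, so we next minimize over $\|\mathbf{u}\|=1$. By Cauchy–Schwarz, $|\tilde{\mathbf{a}}^H\mathbf{W}\mathbf{u}|\le\|\mathbf{W}^H\tilde{\mathbf{a}}\|$, with equality at $\mathbf{u}^\star=\mathbf{W}^H\tilde{\mathbf{a}}/\|\mathbf{W}^H\tilde{\mathbf{a}}\|$ (any unit vector works if $\mathbf{W}^H\tilde{\mathbf{a}}=\mathbf{0}$). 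The minimum sensing contribution is therefore $\lambda_s\big(1-\ln(1+\|\mathbf{W}^H\tilde{\mathbf{a}}\|^2)\big)$. With $\lambda_s=1/T$ this matches the compact form of $I_s^{\mathrm d}$ up to the factor $\ln 2$.

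\emph{Conclusion.} The auxiliary variables separate across the summands of $F$, and the constraints on $\mathbf{W}$ do not involve them. Hence $\min_{\text{aux}}F(\mathbf{W},\cdot)=w_c\sum_k\lambda_k+w_s\lambda_s-\ln 2\,(w_cR_c(\mathbf{W})+w_sI_s^{\mathrm d}(\mathbf{W}))$. Minimizing over the power-constrained set therefore yields exactly the maximizers of $\mathrm{P}_2$, and the optimal auxiliary variables are given by the closed forms above. The main obstacle I anticipate is in the sensing reduction: the rank-one structure makes the sensing term look like a scalar SNR, and I must justify that introducing $\mathbf{u}$ and minimizing over it loses nothing. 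This holds by the Cauchy–Schwarz tightness, with attention to the degenerate case $\mathbf{W}^H\tilde{\mathbf{a}}=\mathbf{0}$.
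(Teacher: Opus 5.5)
Your proposal is correct and follows the paper's route: the per-user WMMSE identity with the closed-form $(c_k^\star,v_k^\star)$, plus Cauchy--Schwarz over the unit vector $\mathbf{u}$ to reduce the sensing term to a scalar WMMSE problem. Your order differs only trivially (you minimize over $(c_s,v_s)$ before $\mathbf{u}$ rather than after). Your explicit bookkeeping of the $\ln 2$ factor and the additive constant $w_c\sum_k\lambda_k + w_s\lambda_s$ fills in the ``few lines of derivation'' that the paper's sketch omits.
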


\begin{IEEEproof}[Proof Sketch]
Introducing a scalar receive equalizer $c_k \in \mathbb{C}$ and a positive weight $v_k > 0$ for user $k$, the classical WMMSE identity concludes that \cite{Shi2011WMMSE}
\begin{equation}
\ln(1+\mathrm{SINR}_k) = \max_{c_k,\,v_k>0}\!\big[\ln v_k - v_k\,e_k(\mathbf{W},c_k) + 1\big],
\label{eq:wmmse_comm}
\end{equation}
where $e_k$ is the per-user MSE as defined in the theorem.
The solution to this problem is given by 
\begin{equation}
(v_k^\star , c_k^\star) = \left( 1 + \mathrm{SINR}_k,  \tfrac{(\mathbf{g}_k^T \mathbf{w}_k)^\ast}{\sum_j |\mathbf{g}_k^T \mathbf{w}_j|^2 + \sigma_u^2} \right).
\label{eq:wmmse_comm_star}
\end{equation}
For the sensing metric, we note that
$I_s^{\mathrm d}$
captures the same form as data rate in $\mathbf{W}$. Applying the
Cauchy-Schwarz identity, i.e.,
\begin{equation}
\|\mathbf{W}^H \tilde{\mathbf{a}} \|^2 = \max_{\|\mathbf{u}\|=1} |\tilde{\mathbf{a}}^H \mathbf{W} \mathbf{u}|^2,
\label{eq:cs}
\end{equation}
it reduces to a scalar effective channel $\tilde{\mathbf{a}}^H \mathbf{W} \mathbf{u}$. Introducing a unit direction $\mathbf{u} \in \mathbb{C}^K$, a scalar
estimator $c_s \in \mathbb{C}$, and a sensing weight $v_s > 0$,
the WMMSE argument yields \cite{Shi2011WMMSE}
\begin{equation}
\ln(1 + \|\mathbf{W}^H \tilde{\mathbf{a}}\|^2) =
\!\!\max_{\substack{\|{\mathbf{u}}\|=1\\c_s,\,v_s>0}}\!\!
\big[\ln v_s - v_s\,e_s + 1\big],
\label{eq:wmmse_sens}
\end{equation}
for  $e_s$ given in the theorem. 
For $\mathbf{W}^H \tilde{\mathbf{a}} \neq 0$,
the solution reads
\begin{equation}
(\mathbf{u}^\star, c_s^\star, v_s^\star) = \left(
\tfrac{\mathbf{W}^H \tilde{\mathbf{a}}}{\|\mathbf{W}^H \tilde{\mathbf{a}}\|},
\tfrac{\|\mathbf{W}^H \tilde{\mathbf{a}}\|}{1 + \|\mathbf{W}^H \tilde{\mathbf{a}}\|^2},
1 + \|\mathbf{W}^H \tilde{\mathbf{a}}\|^2
\right).
\label{eq:wmmse_sens_star}
\end{equation}
In the degenerate case $\mathbf{W}^H \tilde{\mathbf{a}} = 0$, any unit vector $\bf u$ is optimal
and $c_s^\star = 0,\ v_s^\star = 1$.
Combining
\eqref{eq:wmmse_comm} and \eqref{eq:wmmse_sens}, the proof is concluded after a few lines of derivation.
\end{IEEEproof}

\subsection{Solution to the Variational Problem}
The WMMSE identities used in Theorem~\ref{th:1} are exact at primal points. Consequently, block-coordinate descent (BCD) improves the objective of $\mathrm{P}_2$ monotonically. With all auxiliary variables fixed to their marginal optima in \eqref{eq:wmmse_comm_star} and \eqref{eq:wmmse_sens_star},
the subproblem becomes quadratic in $\mathbf{W}$. Let us now define
$\mathbf{\Phi}_c = \!\sum_{k=1}^K\! w_c \lambda_k v_k |c_k|^2\, \mathbf{g}_k^\ast \mathbf{g}_k^T$, the matrix $\mathbf{B}_c$ with columns 
$[\mathbf{B}_c]_{:,k} = w_c \lambda_k v_k c_k^\ast \mathbf{g}_k^\ast$, and scalars 
$\alpha_s = w_s \lambda_s v_s |c_s|^2$ and
$\beta_s = w_s \lambda_s v_s c_s^\ast$.
Introducing a Lagrange multiplier $\mu \ge 0$ for the power
constraint, the KKT stationarity condition yields
\begin{equation}
(\mathbf{\Phi}_c + \mu \mathbf{I})
\mathbf{W}
+ \alpha_s \tilde{\mathbf{a}} \tilde {\mathbf{a}}^H \mathbf{W} \mathbf{u} \mathbf{u}^H
= \mathbf{B}_c + \beta_s \tilde{\mathbf{a}} \mathbf{u}^H.
\label{eq:kkt}
\end{equation}
For $\mu > 0$, we define $\mathbf{A} \triangleq \mathbf{\Phi}_c + \mu \mathbf{I}\succ 0$. Applying Sherman-Morrison identity to invert rank-one perturbation, we have 
\begin{equation}
\mathbf{W}^\star(\mu) = \mathbf{A}^{-1} \mathbf{B}_c + \xi(\mu)\, \mathbf{q}(\mu)\, \mathbf{u}^H,
\label{eq:Wstar}
\end{equation}
where $\mathbf{q}(\mu) = \mathbf{A}^{-1} \tilde{\mathbf{a}}$, and
\begin{equation}
\xi(\mu) = \tfrac{\beta_s - \alpha_s\, \mathbf{q}^H(\mu) \mathbf{B}_c \mathbf{u}}{1 + \alpha_s \gamma(\mu)},
\label{eq:Wstar_aux}
\end{equation}
with $\gamma(\mu) = \tilde{\mathbf{a}}^H \mathbf{q}(\mu)$. Noting that $\alpha_s \ge 0$ and $\gamma(\mu) \ge 0$, we have $1 + \alpha_s \gamma(\mu) \ge 1$ guaranteeing the update is well-defined.

The solution is given by iterating between \eqref{eq:wmmse_comm_star}, \eqref{eq:wmmse_sens_star}, and \eqref{eq:Wstar}. The dual variable $\mu$ is chosen by bi-sectioning to meet the power constraint. 


\section{Location Tuning via Clustering}
\label{subsec:location}
We now solve 
the location tuning subproblem in $\mathrm{P}_1$. 
Starting with the receive-side locations, i.e., $\boldsymbol{\ell}^{\mathrm{rx}}$, the
subproblem admits a single-step closed-form solution:
for any fixed $\mathbf{W}$ and transmit-side configuration, the sensing metric
$I_s^{\mathrm d}$ depends on $\boldsymbol{\ell}^{\mathrm{rx}}$ only through
$\|\mathbf{a}^{\mathrm{rx}}\|^2$, which is maximized when all
$N_{\mathrm{rx}}$ elements on each receive waveguide are aligned with the target by symmetric placement around $x^t$ with spacing $\lambda_g = \lambda_0/n_{\mathrm{eff}}$ \cite{asaad2026low}. As we adopt this placement, $\|\mathbf{a}^{\mathrm{rx}}\|^2$ is treated as a scene-dependent constant while optimizing transmit locations. 

\subsection{Transmit PASS Tuning via User Clustering}
To tune the transmit PASS, we first invoke the $k$-means clustering algorithm to group the users into $C$ distinct clusters. Let
$\mathcal{P} = \{p_0, p_1, \ldots, p_C\}$ be the
set of centroids of clusters projected onto the $x$-axis, with $p_0 = x^t$ being the target
projection, and $\{p_1, \ldots, p_C\}$ the sorted $x$-coordinates of
the cluster centroids. Set $G_c = C+1$. For the $m$-th waveguide, let $\alpha^{(m)} = [\alpha_0^{(m)}, \ldots, \alpha_C^{(m)}]^T$
be an allocation vector in the probability simplex. We approximate 
$N_{m,i} \approx N_{\mathrm{tx}} \alpha_i^{(m)}$ to denote a \textit{continuous} element count assigned to cluster $i$. The closest integer determines the exact number of elements on the waveguide $m$ serving cluster $i$. Collecting these weights, we have allocation matrix
$\boldsymbol{\alpha} \in \mathbb{R}_+^{M_{\mathrm{tx}} \times G_c}$ which belongs to 
\begin{equation}
\mathcal{A} = \Big\{\boldsymbol{\alpha}:\ \alpha_i^{(m)} \ge 0,\
\sum_{i=0}^C \alpha_i^{(m)} = 1,\ \forall m\Big\}.
\label{eq:simplex}
\end{equation}

The elements allocated to each cluster are placed around the
corresponding projection point. We hence model the $N_{m,i}$ elements as occupying the segment $[\bar p_{m,i} - \Delta_{m,i},\ \bar p_{m,i} + \Delta_{m,i}]$ with half-support
\begin{equation}
\Delta_{m,i} = \tfrac{1}{2}(N_{m,i} - 1)_+\,\lambda_g,
\label{eq:halfsupp}
\end{equation}
where $(\cdot)_+ = \max(\cdot, 0)$, and $\bar p_{m,i}$ is the
group center, initialized at $p_i$ and translated if
the segment extends beyond the boundaries of $[0, L_m^{\mathrm{tx}}]$. 
The response of a single element located at $\ell$ on the $m$-th
transmit waveguide follows from \eqref{eq:elem_rx} by setting
$\ell^{\mathrm{tx}}_{m,n} = \ell$. For brevity, we define
$\psi_m(\ell;\mathbf{u}) = \varphi_m(\ell;\mathbf{u})\,
e^{-j\kappa n_{\mathrm{eff}}\ell}$, where
$\varphi_m(\ell;\mathbf{u}) = \tfrac{\beta\varrho_{\mathrm{tx}}}
{D_m(\ell;\mathbf{u})}\,e^{-j\kappa D_m(\ell;\mathbf{u})}$
is the slowly varying free-space envelope and
$D_m(\ell;\mathbf{u}) = \|[\ell,\,y^{\mathrm{tx}}_{m},\,h] - \mathbf{u}\|$.
Since $\kappa n_{\mathrm{eff}}\lambda_g = 2\pi$, consecutive elements
of a group differ in guided phase by exactly $2\pi$ and the whole group
shares a common guided phase. This way, the $\lambda_g$-spaced
elements combine coherently. Averaging the full response over the
segment would instead advance this phase by
$\kappa n_{\mathrm{eff}}\,2\Delta_{m,i} = 2\pi(N_{m,i}-1)$, i.e., by
$N_{m,i}-1$ complete turns. As a result, the segment average nearly vanishes
and the surrogate would report destructive combining, where the array in
fact combines constructively. We therefore hold the guided phase fixed
at the group center and average the envelope only, i.e., the
response of group $i$ at $\mathbf{u}$ reads
\begin{equation}
\bar\psi_{m,i}(\mathbf{u}) = e^{-j\kappa n_{\mathrm{eff}}\bar p_{m,i}}\,
\tfrac{1}{2}\!\int_{-1}^{1}
\varphi_m(\bar p_{m,i} + \xi\Delta_{m,i};\,\mathbf{u})\,d\xi.
\label{eq:psibar}
\end{equation}
For $\Delta_{m,i} = 0$, i.e.\ $N_{m,i} \le 1$, \eqref{eq:psibar} reduces
to the point-source value $\psi_m(\bar p_{m,i}; \mathbf{u})$. Otherwise,
the integral is approximated by a $Q$-point Gauss-Legendre rule
\cite{Golub1969GaussQuad} with $Q=16$, which is chosen to resolve the
residual oscillation of $\varphi_m$ at grazing angles.
Equation \eqref{eq:psibar} anchors the guided phase at the group center
and is therefore a smooth surrogate of the discrete group response rather
than an exact integral representation of it. All reported operating
points are re-evaluated on the exact element-level channels after
discrete recovery. The effective waveguide channel and transmit steering
vector are then given by
\begin{equation}
g_m^{\mathrm{tx}}(\mathbf{u}; \boldsymbol{\alpha})
= \sum_{i=0}^C N_{m,i}\, \bar\psi_{m,i}(\mathbf{u}),
\label{eq:eff_channel}
\end{equation}
from which $\mathbf{g}_k(\boldsymbol{\alpha})$ and
$\mathbf{a}^{\mathrm{tx}}(\boldsymbol{\alpha})$ are obtained.

\subsection{Parameterized Allocation Problem} 

Setting the pinching elements to be located on cluster segments, the location tuning subproblem is transformed to a parameterized weight allocation task, in which the allocation matrix $\boldsymbol{\alpha}$ is determined such that the surrogate objective in $\mathrm{P}_1$ is maximized. The problem is written as
\begin{equation*}
\mathrm{P}_3:\quad \min_{\boldsymbol{\alpha} \in \mathcal{A}}\
f(\boldsymbol{\alpha}) \triangleq -w_c R_c(\boldsymbol{\alpha}) - w_s I_s^{\mathrm d}(\boldsymbol{\alpha}),
\label{eq:Palpha}
\end{equation*}
where $R_c(\boldsymbol{\alpha})$ and $I_s^{\mathrm d}(\boldsymbol{\alpha})$ are the sum-rate and SMI, when parameterized by allocation matrix $\boldsymbol{\alpha}$, respectively. We skip the presentation of the exact form, due to the page limit, and leave it for an extended version of the work.

In $\mathrm{P}_3$, the set $\mathcal{A}$ is convex, but the objective is non-convex. We however note two key properties of $\mathrm{P}_3$: (i) the simplex projection
admits a closed-form sort-and-threshold expression
computable in $O(G_c\log G_c)$ \cite{Wang2013SimplexProj}, and (ii) $\boldsymbol{\alpha}$ has only
$M_{\mathrm{tx}} G_c$ entries. Consequently, the
gradient is cheaply evaluated by forward differences. We hence approximate the solution by a stationary point that is found using projected-gradient method (PGM).

\subsubsection*{PGM-based Solution} 
In PGM, we iteratively update $\boldsymbol{\alpha}$ as
\begin{equation}
\boldsymbol{\alpha}^{(t+1)} = \Pi_{\mathcal{A}}\!\big(
\boldsymbol{\alpha}^{(t)} - s_t\, \widehat{\nabla} f(\boldsymbol{\alpha}^{(t)})\big),
\label{eq:pgm}
\end{equation}
where $\widehat{\nabla} f$ is evaluated by forward differences. The step-size $s_t$ is selected by Armijo backtracking, and the projected-step norm is capped to prevent large jumps across the simplex during the weight sweep. A detailed convergence analysis is omitted due to lack of space and deferred to the extended version of the work.


\subsubsection*{Recovering Discrete Allocations} The continuous
allocation $\alpha^\star$ is converted into a physically realizable PASS configuration in three steps: (i) the continuous element
count $\widehat N_{m,i} = N_{\mathrm{tx}} \alpha_i^{(m)\star}$ is rounded into integer counts $N_{m,i}^{\mathrm{disc}}$ via the largest-remainder method, ensuring
$\sum_i N_{m,i}^{\mathrm{disc}} = N_{\mathrm{tx}}$. To reduce branch switching across neighboring weight points, the rounding target is biased toward the previous reported integer allocation, i.e.,
$\widetilde N_{m,i} =
(1-\rho)\widehat N_{m,i} +
\rho N^{\rm prev}_{m,i}$ for $ 0\leq \rho <1$,
where $N^{\rm prev}_{m,i}$ is the recovered allocation at the previous weight point. When no previous operating point is available, we set $\rho=0$. (ii) The
$N_{m,i}^{\mathrm{disc}}$ elements assigned to projection point
$p_i$ on waveguide $m$ are placed symmetrically around
$\bar p_{m,i}$ as 
\begin{equation}
\ell^{\mathrm{tx}}_{m,\,O_{m,i}+q}
= \bar p_{m,i} + \left(q - \tfrac{N_{m,i}^{\mathrm{disc}} + 1}{2}\right)\lambda_g
\label{eq:place}
\end{equation}
for $q \in \{1,\ldots,N_{m,i}^{\mathrm{disc}}\}$, 
where $O_{m,i} = \sum_{r<i} N_{m,r}^{\mathrm{disc}}$ is the
cumulative offset and $\bar p_{m,i}$ is the boundary-adjusted
group centre defined above. (iii) A residual gap between the continuous surrogate and the deployable
configuration survives any quadrature order, since a continuous segment
and a set of $\lambda_g$-spaced points are different objects rather than
an integral and its approximation. As the integer allocation is a small
object with $M_{\mathrm{tx}}G_c$ entries, we close this gap directly on
the exact channels by a greedy pass: single elements are moved between
projection groups on a waveguide, and a move is retained only if it
strictly improves $w_c R_c + w_s I_s^{\mathrm d}$ with $\mathbf{W}$
re-optimized. This pass acts only on the reported configuration and is
never fed back into the AO loop. After this recovery, the
exact element-level channels are recomputed from the recovered
positions, and the precoder is refined once using the WMMSE update for
performance reporting.

\subsection{Complexity of AO-Based Algorithm}

Let $I_w$ denote the number of inner WMMSE updates in the precoding subproblem, $I_\mu$ the number of bisection steps on the dual variable $\mu$, $I_p$ the number of PGM steps in the location subproblem, and $Q$ the number of Gauss-Legendre nodes used in \eqref{eq:psibar}. 
The precoding subproblem updates the auxiliary variables in $O(K^2 M_{\mathrm{tx}})$, forms $\mathbf{\Phi}_c$ and $\mathbf{B}_c$ in $O(K M_{\mathrm{tx}}^2)$, factorizes $\mathbf{A}=\mathbf{\Phi}_c + \mu \mathbf{I}$ at cost $O(M_{\mathrm{tx}}^3)$ for each trial value of $\mu$, and evaluates \eqref{eq:Wstar} at cost $O(K M_{\mathrm{tx}}^2)$.
The overall cost is $O(I_w(K^2 M_{\mathrm{tx}} + K M_{\mathrm{tx}}^2 + I_\mu(M_{\mathrm{tx}}^3 + K M_{\mathrm{tx}}^2)))$. 
In the location subproblem, the reduced variable $\boldsymbol{\alpha}$ has only $n_{\boldsymbol{\alpha}} = M_{\mathrm{tx}} G_c$ entries. 
One evaluation of the objective in $\mathrm{P}_3$ builds the group responses \eqref{eq:psibar} at the $K$ users and at the target in $O(M_{\mathrm{tx}} G_c Q K)$.
It further evaluates the rate and SMI terms in $O(K^2 M_{\mathrm{tx}})$.
A forward-difference gradient therefore costs $n_{\boldsymbol{\alpha}} + 1$ such evaluations, while the simplex projection adds only $O(M_{\mathrm{tx}} G_c \log G_c)$.
The location update hence scales as $O(I_p M_{\mathrm{tx}}^2 G_c^2 Q K)$ to leading order.

It is worth mentioning two points here: (i) the dimension and cost of the continuous location update do not scale with the element budget $N_{\mathrm{tx}}$. Increasing the number of pinching elements changes the integer counts $N_{m,i}$, but does not change the dimension of $\boldsymbol{\alpha}$. The discrete recovery in \eqref{eq:place} scales as $O(M_{\mathrm{tx}} N_{\mathrm{tx}})$, while the subsequent element-level channel evaluation scales as $O(M_{\mathrm{tx}} N_{\mathrm{tx}} K)$. They are executed once per reported operating point rather than per PGM iteration. (ii) For comparison, a direct forward-difference location update that treats the $M_{\mathrm{tx}} N_{\mathrm{tx}}$ element coordinates as optimization variables requires $M_{\mathrm{tx}} N_{\mathrm{tx}} + 1$ exact objective evaluations per gradient. Considering the dominant element-level channel evaluation cost, this approach scales as $O(I_p M_{\mathrm{tx}}^2 N_{\mathrm{tx}}^2 K)$. In contrast, the proposed parameterization scales as $O(I_p M_{\mathrm{tx}}^2 G_c^2 Q K)$. Hence, for fixed $G_c$ and $Q$, the leading-order complexity reduction grows as $ N_{\mathrm{tx}}^2 / (G_c^2 Q)$.

\section{Numerical Investigation}
\label{sec:numerical_results}

We next validate the proposed design and compare it against the baseline and conventional MIMO ISAC designs. 

\subsection{Simulation Setup and Baseline}
\label{subsec:sim_setup}

Unless otherwise specified, the transmitter employs $M_{\rm tx}=2$ waveguides and the receiver employs a single waveguide, i.e., $M_{\rm rx}=1$. All waveguides are equipped with $N_{\rm tx}=N_{\rm rx}=20$ pinching elements.
We assume that the waveguides are aligned through the $x$-axis with equal length extending over the environment side-length $D_x =20$, i.e., $L_{\rm tx}=L_{\rm rx}=D_x=20$.  The carrier frequency is set to
$f_c=28$~GHz with effective refractive index $n_{\mathrm{eff}}=1.4$. 
The
transmit power budget is $P=1$, the downlink noise variance is
$\sigma_u^2 = 0.05$, and the sensing echo noise variance is
$\sigma_{\rm rx}^2 = 0.01$. The path-loss constant is normalized to $\beta = 1$, with the noise variances scaled accordingly. The target reflection variance is set to $\rho_t^2=1$ and the time frame length is $T=5$. For the PASS setting, the spacing between transmit waveguides is set to $d=4$, the gap between the transmit and receive blocks is
$D_{\mathrm{gap}}=4$, and the height is $h=3$.  The service region is $x\in[1,D_{x}-1]$ and $y\in[0,y_{\max}]$, with $y_{\max}$
determined by the geometry.  We consider $K=4$ users distributed in two clusters whose locations are drawn using the
centroid-first disk model with radius $0.5$. Reported curves are averaged over 40 seeds in Fig.~\ref{fig:rateregion} and 10 seeds in Figs.~\ref{fig:convergence}, \ref{fig:sidelength}, and~\ref{fig:nelements}.

The proposed scheme is denoted by \textit{PASS (L+W)}, which alternately optimizes the precoder $\mathbf{W}$ and the pinching location. The scheme is compared against two benchmarks: \textit{Baseline~A} in which the transmit pinching elements are uniformly distributed over $[0,L_{\rm tx}]$, and $\mathbf{W}$ and receive pinching locations are optimized.
\textit{Baseline~B} which represents a fixed-location MIMO-PASS benchmark with the transmit elements being centered around $L_{\rm tx}/2$ with guided-wavelength spacing.  In this case, only $\mathbf{W}$ is optimized.


\begin{figure}[t]
\centering
\includegraphics[width=0.9\linewidth]{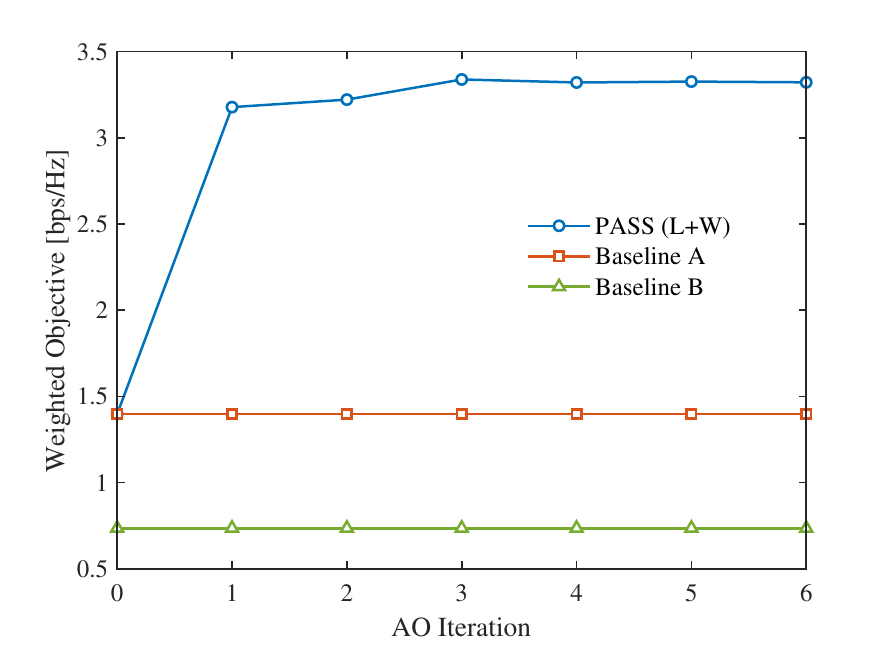}
\caption{Weighted sensing and communication rate vs. AO iteration.}
\label{fig:convergence}
\end{figure}

\subsection{Numerical Results}
\subsubsection*{Algorithm Convergence}
Fig.~\ref{fig:convergence} plots the average weighted 
objective $w_c R_c + w_s I_s^{\mathrm d}$ versus the AO iteration 
index at the balanced weighting 
$(w_c, w_s) = (0.5, 0.5)$. The proposed PASS scheme 
starts from the same uniform initialization as 
Baseline~A, performs one grouped re-initialization, and 
then refines the configuration through five AO updates. 
The two baselines produce constant trajectories, as 
they only optimize $\mathbf{W}$ for fixed element 
positions. The grouped-initialization step alone yields a 
substantial improvement, confirming the benefit of 
concentrating elements near the projection points. Two further AO updates close most of the remaining gap, after which the trajectory is essentially flat. Such rapid saturation is what makes the design usable at the system update rate; see the discussion in Section~\ref{sec:intro}. In fact, in this case, the entire AO loop must fit within one coherence interval. 
The result shows significant improvement as compared to baselines, indicating the benefit of location 
optimization in the hybrid design.

\subsubsection*{Sensing-Communication Trade-off}
The achievable region for communication-sensing rate is 
obtained by sweeping $w_s$ with $w_c = 1-w_s$ between $0$ and $1$. The result is shown in 
Fig.~\ref{fig:rateregion}. 
The markers indicate all sampled weight points, while the connected non-dominated points represent the outer front. The proposed design strictly dominates both baselines 
in the Pareto sense across the entire sweep, containing both baselines entirely. 
The proposed scheme is also nearly flat over a 
wide range of sensing levels, indicating that 
sensing requirements are met with minimal 
communication penalty.

\begin{figure}[t]
\centering
\includegraphics[width=0.9\linewidth]{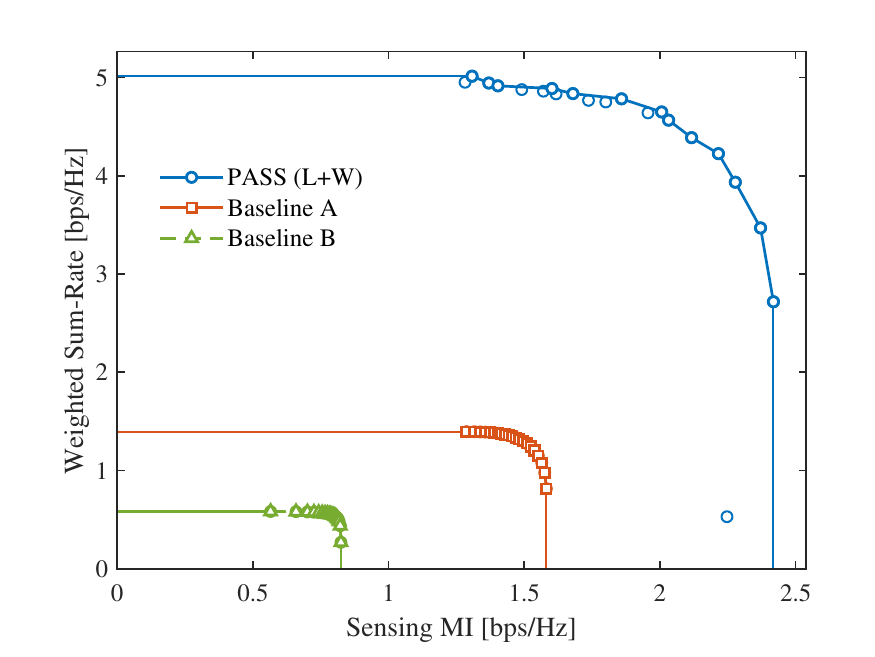}
\caption{Sensing-communication trade-off for proposed scheme and baselines.}
\label{fig:rateregion}
\end{figure}

\subsubsection*{Impacts of Environmental and System Factors}
We investigate how the weighted objective 
depends on the waveguide aperture. The transmit and 
receive lengths are jointly swept while the element counts are held fixed at $N_{\mathrm{tx}} = N_{\mathrm{rx}} = 20$. 
Fig.~\ref{fig:sidelength} shows the results. The fixed baselines degrade as the aperture grows with a fixed number of elements, while PASS remains robust by reallocating elements near the scene-dependent projection points. This scalability advantage is a direct consequence of treating the element locations as a design variable rather than a fixed geometric constant.
\begin{figure}[t]
\centering
\includegraphics[width=0.9\linewidth]{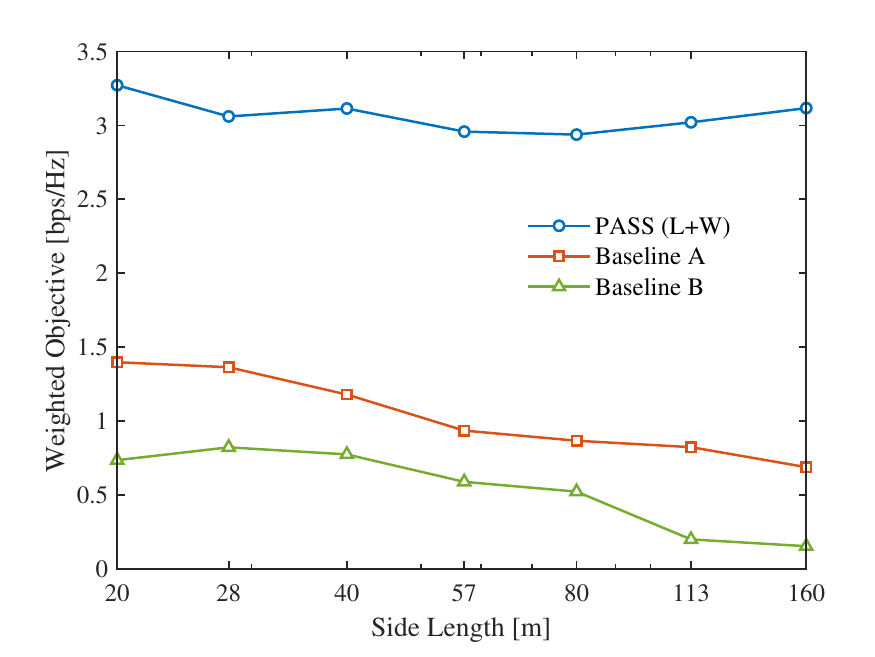}
\caption{Weighted sensing and communication rate vs. side length $D_x$.}
\label{fig:sidelength}
\end{figure}


Finally, we study the effect of the element budget on 
the final weighted objective. The waveguide lengths are 
held fixed at $L^{\mathrm{tx}} = L^{\mathrm{rx}} = 20$, 
and the element counts are jointly varied. 
Fig.~\ref{fig:nelements} shows the effect of the element budget. PASS benefits from additional elements, up to a saturation point, because the allocation variables direct them toward useful user or target projections, until the growing physical extent of a group offsets the array gain.

\begin{figure}[t]
\centering
\includegraphics[width=0.9\linewidth]{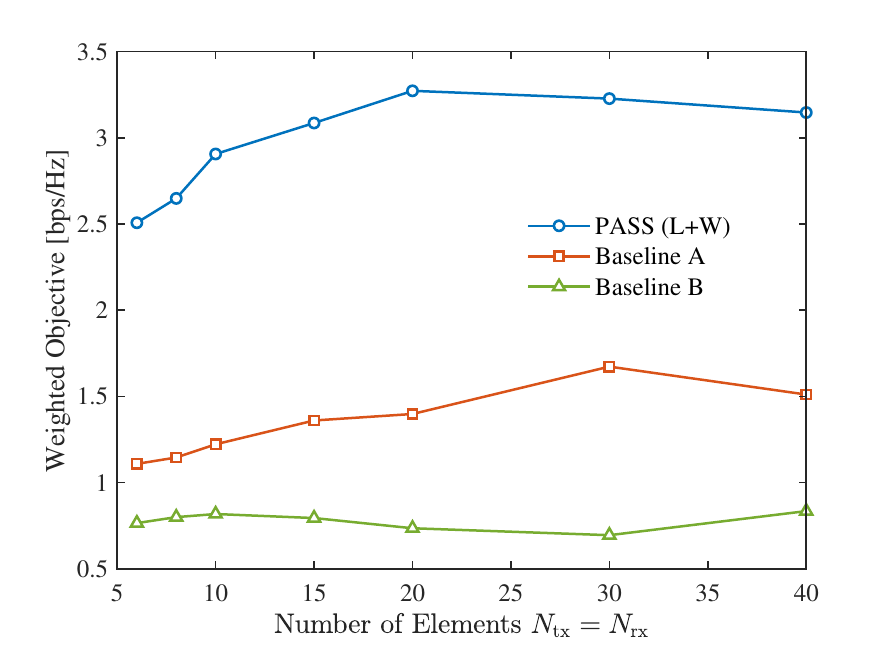}
\caption{Weighted sensing and communication rate vs. the number of elements.}
\label{fig:nelements}
\end{figure}

\section{Conclusion}
\label{sec:conclusion}
We proposed a hybrid design for PASS-assisted ISAC systems, in which the digital precoder and pinching locations are jointly designed to maximize both communication and sensing rates. The problem was formulated as a multi-objective optimization. Using the scalarization technique, the problem was reformulated as a weighted sensing and communication sum rate maximization. Due to its non-convexity, the resulting design problem was solved iteratively via the AO scheme. The marginal precoding subproblem was transformed into an equivalent WMMSE task, whose closed-form update rule was derived via rank-one update. The pinching location optimization task was further addressed via user clustering trick, which transformed the original problem into a classical weight allocation task on a product of probability simplexes. Numerical results validate the superiority of the proposed scheme 
as compared with the basic PASS-assisted and conventional fixed-antenna 
baselines.  
Our findings indicate that the spatial degrees of freedom offered by PASS can be efficiently deployed in practical ISAC settings. 


%




\ifCLASSOPTIONcaptionsoff
  \newpage
\fi



\bibliographystyle{IEEEtran}
\bibliography{refs}
\end{document}